\documentclass[conference]{IEEEtran}
\usepackage{amsmath,epsfig,color,url}
\usepackage{amsfonts,bm}
\usepackage{graphicx}  
\usepackage[utf8]{inputenc}
\usepackage[T1]{fontenc}
\usepackage{float}
\usepackage[caption=false]{subfig}
\usepackage{wrapfig, epsfig}
\usepackage{psfrag}                                           
\usepackage{fancyhdr}
\usepackage{multirow}
\usepackage{mathrsfs}
\usepackage{amsmath,amsthm,amssymb}
\usepackage{float,balance}
\usepackage{tikz}
\usepackage{pgf}
\usetikzlibrary{dsp,chains}
\usetikzlibrary{arrows}
\newtheorem{theorem}{Theorem}

\newtheorem{lemma}{Lemma}

\DeclareMathOperator{\tr}{tr}
\newcommand\prob[1]{\mathop{\rm P}\nolimits\left[#1\right]}

\begin{document}
\IEEEoverridecommandlockouts
\title{Parallel BCC with One Common and Two Confidential Messages and Imperfect CSIT}

\author{ \IEEEauthorblockN{Ahmed Benfarah, Stefano Tomasin and Nicola Laurenti}
           \IEEEauthorblockA{Department of Information Engineering, University of Padova  \\
                             via Gradenigo 6/B, 35131 Padova, Italy. Email: firstname.lastname@dei.unipd.it }
                             \thanks{
This work was partly supported by the Italian Ministry of Education and
Research (MIUR) under project ESCAPADE
(Grant RBFR105NLC) in the ``FIRB-Futuro in Ricerca 2010'' funding program.
}
                             }
                             
\maketitle

\begin{abstract}

We consider a broadcast communication system over parallel sub-channels where the transmitter sends three messages: a common message to two users, and two confidential messages to each user which need to be kept secret from the other user. We assume partial channel state information at the transmitter (CSIT), stemming from noisy channel estimation. The first contribution of this paper is the characterization of the secrecy capacity region boundary as the solution of weighted sum-rate problems, with suitable weights. Partial CSIT is addressed by adding a margin to the estimated channel gains. The second paper contribution is the solution of this problem in an almost closed-form, where only two single real parameters must be optimized, e.g., through dichotomic searches. On the one hand, the considered problem generalizes existing literature where only two out of the three messages are transmitted. On the other hand, the solution finds also practical applications into the resource allocation of orthogonal frequency division multiplexing (OFDM) systems with both secrecy and fairness constraints. 
\end{abstract}

\begin{keywords}
Broadcast communication, orthogonal frequency division multiplexing (OFDM), parallel channels, physical layer security, power allocation.
\end{keywords}

\section{Introduction}\label{sec:intro}

With the widespread adoption of wireless networks, security becomes an inherent issue of nowadays communications. In this context, \textit{physical layer security} arises as a promising tool to complement traditional cryptographic solutions. The basic concepts of this approach were founded by the pioneering work of Wyner~\cite{Wyner}. He introduced the \textit{wiretap} channel model in which the transmitter aims at sending reliably a confidential message to the legitimate receiver in presence of an  eavesdropper. The \textit{secrecy capacity} measures the maximum information rate at which the transmitter can reliably communicate a secret message to the receiver, while the eavesdropper left with no information on the message. Recently, the wiretap channel witnessed a renewed interest and many research works investigated the secrecy capacity of wireless fading~\cite{Gopala}, parallel~\cite{Laurenti,Renna12} and multiple-input multiple-output (MIMO) channels~\cite{Khistia,Renna14}. All of these works deal with the point-to-point wiretap channel model. There has been also an effort to generalize physical layer security to the multi-user context (see~\cite{Mukherjee} for a survey).

An important scenario of multi-user physical layer security is the \textit{broadcast channel with confidential messages} (BCC)~\cite{Csiszar}. In~\cite{Liang08}, the authors established the secrecy capacity region of parallel sub-channels where a source node has a common message for two receivers and a confidential message is intended only for one receiver. Extensive research work was made to characterize the secrecy capacity region of Gaussian MIMO BCC~\cite{Ekrem12}~\cite{Liu13}. In all these works, the communication scenario consists of a source node communicating with two receiving users maliciously eavesdropping on each other. Secure broadcasting to multiple receivers was analyzed in~\cite{Khisti08,Ekrem11} when the eavesdropper is external to the group of users. For an overview of the different considered BCC scenarios, the reader can see~\cite{Ekrem10}.

In this paper, we consider a parallel BCC with two receivers, where the transmitter aims at sending three independent messages with a total power constraint: one common message to both users and two confidential messages, one for each user. We further consider the case in which only partial channel state information at the transmitter (CSIT) is available before transmission, stemming from a noisy estimate of the channels. We first characterize the secrecy capacity region of the considered system where  partial CSIT is addressed by adding a margin to the estimated channel gains. Then, an almost closed-form solution to the weighted sum-rate maximization problem is derived, where two real variables must be optimized, e.g., through dichotomic search. Our contribution generalizes some related work which considered only two out of the three possible messages: in~\cite{Jorswieck,Ekrem09}, the authors derived the optimal power allocation in presence of two confidential messages without a common while in~\cite{Liang08}, the optimal power allocation for the case of one common message and one confidential message was established.


\section{System Model}\label{sec:model}

We consider\footnote{\textit{Notation:} Vectors and matrices are written in bold letters. $\log$ and $\ln$ denote the base-2 and natural-base logarithms, respectively. We indicate the positive part of a real quantity $x$ as $\left[x \right]^+ {=} \max \lbrace x ; 0 \rbrace $. $\mathbb{E}[X]$ denotes the expectation of the random variable $X$, $\mathbb{I}(X;Y)$ denotes the mutual information between variables $X$ and $Y$. $\tr(\bm{X})$ denotes the trace of a square matrix $\bm{X}$. For two positive semi-definite matrices $\bm{X}$ and $\bm{Y}$, we write $\bm{X} \preceq \bm{Y}$ whenever $\bm{Y}-\bm{X}$ is a positive semi-definite matrix.}  parallel BCC (e.g., OFDM) with $L$ sub-channels, one transmitter and two receiving users. Note that we consider real-valued signals. The transmitter sends the real-valued symbol $x_\ell$ on sub-channel $\ell$. The channel input is subject to the statistical total power constraint
\begin{equation}\label{eq:powerconstraint}
\sum_{\ell=1}^L \mathbb{E} \{x_\ell^2\} \leq P .
\end{equation}
We assume that the channel is quasi-static, i.e., it remains constant over the entire duration of a single packet.
At sub-channel $\ell$ of receiver $i =1,2$ we obtain
\begin{equation}\label{eq:model}
y_{i,\ell}= h_{i,\ell} x_{\ell} + n_{i,\ell}\,,
\end{equation}
where $n_{i, \ell}$ is the real-valued zero-mean unit variance additive white Gaussian noise (AWGN) term. Noise components for different sub-channels are independent. $h_{i,\ell}$ is the real-valued channel coefficient. Let $\alpha_{i,\ell} =  h_{i,\ell}^2$ be the channel power gain. We assume that the transmitter has some partial channel state information. It knows the channel statistical distribution and possesses estimates $\hat{h}_{i,\ell}$ of the channel coefficients, that are corrupted by noise
\begin{equation}\label{eq:estimate}
\hat{h}_{i,\ell} = h_{i,\ell} + \eta_{i,\ell} \, ,
\end{equation}
where $\eta_{i,\ell}$ are iid real-valued zero-mean Gaussian noise with variance $\sigma^2$. The conditional probability density function (pdf) of the channel gain $\alpha_{i,\ell}$ given the channel coefficient estimate $\hat{h}_{i,\ell}$ can be computed from the \textit{a priori} pdf of the channel coefficient $f_{h_{i,\ell}}$ and that of the estimation noise $f_\eta$ as
\begin{equation}\label{eq:bayes}
\begin{split}
f_{\alpha_{i,\ell} | \hat{h}_{i,\ell}}(a | b) = \\
 \frac{f_{\eta}(b-\sqrt a) f_{h_{i,\ell}}(\sqrt a) + f_{\eta}(b+\sqrt a) f_{h_{i,\ell}}(-\sqrt a)}{2\sqrt a \left [ f_{h_{i,\ell}} \otimes f_{\eta} \right](b) } \, .
\end{split}
\end{equation}
where $\otimes$ denotes the convolution operation. 

As illustrated in Fig.~\ref{fig:ofdmbroadcast}, we consider a BCC where the transmitter aims at reliably delivering a common message $M_0$ with information rate $R_0$ and two separate confidential messages $M_1$ and $M_2$ with information rates $R_1$ and $R_2$, respectively~\cite{Xu}. The common message $M_0$ is intended for both receivers, while  confidential message $M_i$ is intended for receiver $i$  and needs to be kept secret from the other receiver. The transmitter allocates power $p_{i,\ell}$ on sub-channel $\ell$ for the confidential message $M_i$, and power $p_{0, \ell}$ for the common message $M_0$. 

In order to have {\em reliable transmissions} to the intended receiver, vanishing error probabilities must be obtained as the codeword length $n$ grows to infinity. {\em Secrecy} is measured in terms of the information leakage rate to the non-intended receiver (aka \textit{weak} information theoretic secrecy)~\cite{Wyner,Csiszar}, i.e., defining $\bm{Y}_i^n = [\bm{y}_{i}(1), \ldots, \bm{y}_{i}(n)]$,  we require
\begin{equation}\label{eq:secrecycontraint}
\frac{1}{n} \mathbb{I}(M_1;\bm{Y}_2^n)\rightarrow 0 \quad ; \quad \frac{1}{n} \mathbb{I}(M_2;\bm{Y}_1^n)\rightarrow 0, 
\end{equation}
as $n \rightarrow \infty$.

\begin{figure}
\centering
\scalebox{0.8}{
\begin{tikzpicture}
	\node[circle] (Al) {};
	\node[dspfilter, right= of Al,xshift=4em] (tx) {Tx};
	\node[coordinate, right= of tx] (in) {};
	\node[coordinate, above= of in, yshift=-.2em] (in1) {};
	\node[coordinate, below= of in, yshift=.2em] (in2) {};
	\node[dspmixer, right= of in1] (ch1) {};
	\node[dspmixer, right= of in2] (ch2) {};
	\node[below= of ch1,yshift=1em,inner sep=0mm,outer sep=0mm] (h1) {$\{h_{1,\ell}\}$};
	\node[below= of ch2,yshift=1em,inner sep=0mm,outer sep=0mm] (h2) {$\{h_{2,\ell}\}$};
	\node[dspadder,right= of ch1] (a1) {};		
	\node[dspadder,right= of ch2] (a2) {};		
	\node[below= of a1,yshift=1em,inner sep=0mm,outer sep=0mm] (n1) {$\{n_{1,\ell}\}$};
	\node[below= of a2,yshift=1em,inner sep=0mm,outer sep=0mm] (n2) {$\{n_{2,\ell}\}$};		
	\node[dspfilter,right= of a1] (rx1) {\, Rx 1 };		
	\node[dspfilter,right= of a2] (rx2) {\,  Rx 2 };		
	\node[circle,right= of rx1] (Bo) {};
	\node[circle,right= of rx2] (Ev) {};
 	\draw[dspconn] (Al) -- node[midway,above] {$(M_0, M_1, M_2)$} (tx);
	\draw[dspline] (tx) -- node[midway,above] {$\{x_{\ell}\}$} (in); 
	\draw[dspline] (in) -- (in1); 	
	\draw[dspline] (in) -- (in2); 		
	\draw[dspconn] (in1) -- (ch1); 	
	\draw[dspconn] (in2) -- (ch2); 		
	\draw[dspconn] (ch1) -- (a1); 		
	\draw[dspconn] (ch2) -- (a2); 				
	\draw[dspconn] (a1) -- node[midway,above] {$\{y_{1,\ell}\}$} (rx1); 		
	\draw[dspconn] (a2) -- node[midway,above] {$\{y_{2,\ell}\}$} (rx2); 				
	\draw[dspconn] (rx1) -- node[midway,above] { $ \qquad (\hat{M}_0,\hat{M}_1)$} (Bo); 		
	\draw[dspconn] (rx2) -- node[midway,above] { $ \qquad (\tilde{M}_0,\hat{M}_2)$} (Ev); 				
 	\draw[dspconn] (n1) -- (a1);
 	\draw[dspconn] (n2) -- (a2);
	\draw[dspconn] (h1) -- (ch1); 	
	\draw[dspconn] (h2) -- (ch2); 	
	
\end{tikzpicture}}
\caption{Parallel BCC with common and two confidential messages.}\label{fig:ofdmbroadcast}
\end{figure}
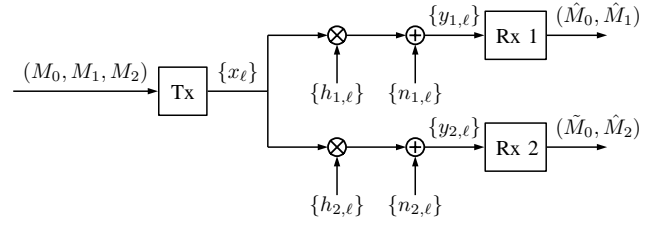

\subsection{Secrecy Capacity  Region}\label{subsec:secrecyregion}

Since the transmitter does not know the exact channel realization, secrecy outage may occur, i.e., the transmitted message is either not secret or not decoded by the receiver. However, computing the secrecy outage probability is an involved task, therefore we consider here a simpler approach where we add some margin to the channel estimates in order to keep the outage probability under control. 

In particular, the transmitter can compute upper and lower bounds on the channel gains $\alpha_{i,\ell}^+$ and $\alpha_{i,\ell}^-$ that provide the desired outage probability. We consider here a simpler approach where the same probability threshold $\varepsilon$ is used on each channel, i.e.,
\begin{equation}
\prob{\alpha_{i,\ell} > \alpha_{i,\ell}^+ \left| \hat{h}_{i,\ell}\right.} < \varepsilon \,,\quad\prob{\alpha_{i,\ell} < \alpha_{i,\ell}^- \left| \hat{h}_{i,\ell}\right.} < \varepsilon \, .  \\  \label{eq:bestworstgain}
\end{equation}
Then, $\alpha_{i,\ell}^-$ will be considered as the channel gain to the intended receiver, while $\alpha_{i,\ell}^+$ is the channel gain to the unintended receiver. The probabilities in (\ref{eq:bestworstgain}) can be computed using the pdf 
(\ref{eq:bayes}). Note also that when perfect CSIT is available $\alpha_{i,\ell}^- = \alpha_{i,\ell}^+$.


The secrecy capacity region $\mathcal{C}_s$ is defined as the closure of all rate triples $(R_0,R_1,R_2)$ that can be achieved by any coding scheme while maintaining both reliability and secrecy requirements. Let $\bm{K}_x$ be the covariance matrix of $\{x_\ell\}_{\ell=1,\ldots,L}$, the secrecy capacity region of the Gaussian MIMO BCC under a covariance constraint (i.e., $\bm{K}_x \preceq \bm{K}$) was characterized in~\cite{Ekrem12,Liu13}. The secrecy capacity region $\mathcal{C}_s$ under the  total power constraint (\ref{eq:powerconstraint}) is obtained by the union over all covariance matrices which satisfy $\tr(\bm{K}) \leq P$. Moreover, the parallel channels can be seen as a special case of MIMO channels. Having independent inputs for each sub-channel is optimal for the parallel BCC~\cite{Liang08}. Consequently, it is sufficient to have diagonal input covariance matrices~\cite{Liu13,Ekrem12} for the parallel BCC. We define the power allocation vector $\bm{p} =[p_{0, 1}, \ldots, p_{0,L}, p_{1, 1}, \ldots, p_{1, L}, p_{2,1}, \ldots, p_{2,L}]$. The set $\mathcal{P}$ includes all power allocation vectors $\bm{p}$ that satisfy the total power constraint (\ref{eq:powerconstraint}), i.e.,
\begin{equation}\label{eq:powerconstraintversiontwo}
\mathcal{P}= \left\{ \bm{p}: \sum_{\ell=1}^L (p_{0,\ell}+p_{1,\ell}+p_{2,\ell}) \leq P \right\} .
\end{equation}
Let us partition the channel index set $\{1,\ldots,L\}$ into 
\begin{equation}\label{eq:setsdefinition}
\begin{split}
 & S_1= \{\ell: \alpha_{1,\ell}^- > \alpha_{2,\ell}^+ \} \ , \quad S_2= \{\ell: \alpha_{2,\ell}^- >  \alpha_{1,\ell}^+ \} \ ,\\
 & S_3 = \{1,\ldots,L\} \setminus \left(S_1\cup S_2\right). 
\end{split}
\end{equation}  
By combining the results of~\cite{Ekrem12} and~\cite{Liang08}, the secrecy capacity region of the parallel BCC with common and two confidential messages can be written as

\begin{equation}\label{eq:secrecycapacityofdm}
\mathcal{C}_s  =   \bigcup_{\bm{p} \in \mathcal{P}}  \left\{ \begin{array}{l}  (R_0,R_1,R_2): \\
                                     0 \leq R_0 \leq R_{0}^{\rm max}(\bm{p}) \\
                                                            \\
                                    0 \leq  R_i \leq R_{i}^{\rm max}(\bm{p}) \end{array} \right.   
\end{equation}
where 
\begin{equation}\label{eq:commonrate}
R_{0}^{\rm max}(\bm{p}) = \min \lbrace R_{01}^{\rm max}(\bm{p}),R_{02}^{\rm max}(\bm{p}) \rbrace
\end{equation}
\begin{equation}\label{eq:commonrateone}
\begin{split}
R_{01}^{\rm max}(\bm{p}) =  \frac{1}{2} \sum_{\ell=1}^L & \log \bigl(1+ \alpha_{1,\ell}^-[p_{0,\ell}+p_{1,\ell}+p_{2,\ell}]\bigr)\\
                                      & - \log \bigl(1+\alpha_{1,\ell}^-[p_{1,\ell}+p_{2,\ell}]\bigr)
\end{split}
\end{equation}  
\begin{equation}\label{eq:commonratetwo}
\begin{split}
R_{02}^{\rm max}(\bm{p}) =  \frac{1}{2} \sum_{\ell=1}^L & \log \bigl(1+ \alpha_{2,\ell}^-[p_{0,\ell}+p_{1,\ell}+p_{2,\ell}]\bigr)\\
                                      & - \log \bigl(1+\alpha_{2,\ell}^-[p_{1,\ell}+p_{2,\ell}]\bigr)
\end{split}
\end{equation} 
\begin{equation}\label{eq:confidentialone}
R_1^{\rm max}(\bm{p}) = \frac{1}{2}  \sum_{\ell \in S_1} \log(1+\alpha_{1,\ell}^- p_{1,\ell}) -\log(1+  \alpha_{2,\ell}^+ p_{1,\ell})
\end{equation}
\begin{equation}\label{eq:confidentialtwo}
R_2^{\rm max} (\bm{p}) = \frac{1}{2}  \sum_{\ell \in S_2} \log(1+\alpha_{2,\ell}^- p_{2,\ell}) -\log(1+  \alpha_{1,\ell}^+ p_{2,\ell}) .
\end{equation}
The expression of the secrecy capacity region states that the receivers decode the common message first, by treating the confidential messages as noise. Then, each receiver decodes its own confidential message.
 
Note that the expressions of the secrecy capacity region hold also for perfect CSIT by letting $\alpha_{i,\ell}^- = \alpha_{i,\ell}^+$.

 \section{ Power Allocation Algorithm}\label{sec:algorithm}

First note that the secrecy capacity region  (\ref{eq:secrecycapacityofdm}) is convex. Therefore, for each triplet $(R_0^{\rm max}(\bm{p}^*),R_1^{\rm max}(\bm{p}^*),R_2^{\rm max}(\bm{p}^*))$ on the region boundary, there exists a weight triplet $w_0, w_1, w_2 >0$ satisfying
\begin{equation}\label{eq:optimizationproblem}
\begin{split}
\bm{p}^*= \arg \max_{\bm{p} \in \mathcal{P}  } & \left[  w_0 R_{0}^{\rm max}(\bm{p})   + w_1 R_{1}^{\rm max}(\bm{p}) + w_2 R_{2}^{\rm max} (\bm{p}) \right] .
\end{split}
\end{equation}
By solving (\ref{eq:optimizationproblem}) all points of the secrecy capacity region boundary are obtained. Note also that the weighted sum rate problem is also of interest for resource allocation in OFDM systems with a fairness constraint, where the weights are selected in order to enforce the desired fairness.   

Now, the optimization problem (\ref{eq:optimizationproblem}) together with (\ref{eq:commonrate}) is a max-min optimization, and can be solved by using an approach similar to that of~\cite{Liang07}. The particular result is provided in the following lemma, whose proof is not reported as it follows the same steps of~\cite{Liang07}.

\begin{lemma}\label{lem:maxmin}
The solution of (\ref{eq:optimizationproblem}) also solves one of the following three problems:
%
\begin{IEEEeqnarray*}{rCl}
\mbox{\bf(P1)} \ \bm{p}^{(1)} = & \arg \max_{\bm{p} \in \mathcal{P}} & \left[  w_0 R_{01}^{\rm max} (\bm{p}) + w_1 R_{1}^{\rm max}  (\bm{p}) 
{} + w_2 R_{2}^{\rm max} (\bm{p}) \right] \\
\mbox{\bf(P2)} \ \bm{p}^{(2)} = & \arg \max_{\bm{p} \in \mathcal{P}} & \left[  w_0 R_{02}^{\rm max} (\bm{p}) + w_1 R_{1}^{\rm max}  (\bm{p}) 
{} + w_2 R_{2}^{\rm max} (\bm{p}) \right] \\
\mbox{\bf(P3)} \ \bm{p}^{(3)} = & \arg \max_{\bm{p} \in \mathcal{P}} & \left[  w_0 (\mu R_{01}^{\rm max} (\bm{p}) + (1-\mu)R_{02}^{\rm max} (\bm{p}))  \right. \\ & &\left. {} + w_1 R_{1}^{\rm max}  (\bm{p}) + w_2 R_{2}^{\rm max} (\bm{p}) \right] 
\end{IEEEeqnarray*} 
for some $\mu \in (0,1)$ in (P3). In particular, 
\begin{equation}\label{eq:caseslemma}
\bm{p}^*=
\begin{cases}
 \bm{p}^{(1)} & \rm{if} \; R_{01}^{\rm max}(\bm{p}^{(1)}) < R_{02}^{\rm max}(\bm{p}^{(1)})  \\ 
    \bm{p}^{(2)} &  \rm{if} \;  R_{01}^{\rm max}(\bm{p}^{(2)}) > R_{02}^{\rm max} (\bm{p}^{(2)}) \\
   \bm{p}^{(3)}  &  \rm{if}  \; R_{01}^{\rm max}(\bm{p}^{(3)}) = R_{02}^{\rm max} (\bm{p}^{(3)}) 
\end{cases}
.
\end{equation}

\end{lemma}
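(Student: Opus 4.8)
The plan is to turn the max--min problem~(\ref{eq:optimizationproblem})--(\ref{eq:commonrate}) into a plain maximization by exploiting the variational description of the minimum, and then to certify optimality by a sandwich inequality. First I would record that maximizers exist at all: $\mathcal{P}$ is closed and bounded, hence compact, and every rate function appearing in~(\ref{eq:secrecycapacityofdm}) is continuous in $\bm{p}$, so (P1)--(P3) and~(\ref{eq:optimizationproblem}) each attain their maxima. The central device is the identity $\min\{a,b\}=\min_{\mu\in[0,1]}\{\mu a+(1-\mu)b\}$, which applied to~(\ref{eq:commonrate}) lets me write the objective of~(\ref{eq:optimizationproblem}) as $g(\bm{p})=\min_{\mu\in[0,1]}F(\bm{p},\mu)$, where I set $F(\bm{p},\mu)=w_0[\mu R_{01}^{\rm max}(\bm{p})+(1-\mu)R_{02}^{\rm max}(\bm{p})]+w_1R_1^{\rm max}(\bm{p})+w_2R_2^{\rm max}(\bm{p})$. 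With this notation (P1) is $\max_{\bm{p}}F(\bm{p},1)$, (P2) is $\max_{\bm{p}}F(\bm{p},0)$, and (P3) is $\max_{\bm{p}}F(\bm{p},\mu)$. The single inequality I will use repeatedly is $g(\bm{p})\le F(\bm{p},\mu)$ for every $\bm{p}\in\mathcal{P}$ and every $\mu\in[0,1]$, with equality whenever $\mu$ attains the inner minimum at $\bm{p}$.

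The sufficiency of each branch of~(\ref{eq:caseslemma}) then follows from a short chain. In the first case, let $\bm{p}^{(1)}$ solve (P1); if $R_{01}^{\rm max}(\bm{p}^{(1)})<R_{02}^{\rm max}(\bm{p}^{(1)})$ the inner minimum at $\bm{p}^{(1)}$ is attained at $\mu=1$, so $g(\bm{p}^{(1)})=F(\bm{p}^{(1)},1)$, and for arbitrary $\bm{p}\in\mathcal{P}$ I chain $g(\bm{p})\le F(\bm{p},1)\le F(\bm{p}^{(1)},1)=g(\bm{p}^{(1)})$, the first step being the master inequality at $\mu=1$ and the second optimality of $\bm{p}^{(1)}$ for (P1). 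Hence $\bm{p}^{(1)}$ maximizes $g$ and solves~(\ref{eq:optimizationproblem}). The second case is identical with $\mu=0$ and (P2). For the third case, if $\bm{p}^{(3)}$ solves (P3) for some $\mu\in(0,1)$ and $R_{01}^{\rm max}(\bm{p}^{(3)})=R_{02}^{\rm max}(\bm{p}^{(3)})$, the two common rates coincide so $g(\bm{p}^{(3)})=F(\bm{p}^{(3)},\mu)$ for \emph{that} $\mu$, and the same chain $g(\bm{p})\le F(\bm{p},\mu)\le F(\bm{p}^{(3)},\mu)=g(\bm{p}^{(3)})$ closes the argument. Notably, this direction needs no concavity.

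It remains to show that the three conditions in~(\ref{eq:caseslemma}) are exhaustive. I would solve (P1) and (P2); if $R_{01}^{\rm max}(\bm{p}^{(1)})<R_{02}^{\rm max}(\bm{p}^{(1)})$ or $R_{01}^{\rm max}(\bm{p}^{(2)})>R_{02}^{\rm max}(\bm{p}^{(2)})$ we are already done. In the remaining regime $R_{01}^{\rm max}(\bm{p}^{(1)})\ge R_{02}^{\rm max}(\bm{p}^{(1)})$ and $R_{01}^{\rm max}(\bm{p}^{(2)})\le R_{02}^{\rm max}(\bm{p}^{(2)})$ I would run an intermediate-value argument in $\mu$. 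Letting $V(\mu)=\max_{\bm{p}\in\mathcal{P}}F(\bm{p},\mu)$, note that $F$ is affine in $\mu$, so $V$ is a pointwise maximum of affine functions, hence convex and continuous on $[0,1]$, and by Danskin's theorem its subdifferential at $\mu$ consists of the values $w_0[R_{01}^{\rm max}(\bm{p}^{(\mu)})-R_{02}^{\rm max}(\bm{p}^{(\mu)})]$ over maximizers $\bm{p}^{(\mu)}$. The gap $R_{01}^{\rm max}-R_{02}^{\rm max}$ along the maximizers is therefore a monotone (nondecreasing) object that is $\le0$ at $\mu=0$ (from (P2)) and $\ge0$ at $\mu=1$ (from (P1)), so there is some $\mu^\star\in[0,1]$ at which a maximizer satisfies $R_{01}^{\rm max}(\bm{p}^{(\mu^\star)})=R_{02}^{\rm max}(\bm{p}^{(\mu^\star)})$, an endpoint value folding back into (P1)/(P2); this is exactly the third branch.

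The step I expect to be the real obstacle is this last one, and the reason is that $R_{01}^{\rm max}$ and $R_{02}^{\rm max}$ are \emph{not} concave in $\bm{p}$ (a difference of two logarithms sharing the same channel gain has an indefinite Hessian), so the inner maximizers $\bm{p}^{(\mu)}$ need neither be unique nor vary continuously, and a naive ``continuous selection plus IVT'' is not licensed. My way around this is to reason on the value function $V(\mu)$ rather than on the maximizer: convexity of $V$ makes its one-sided derivatives monotone and guarantees a $\mu^\star$ whose subdifferential straddles (or contains) zero. Making the passage from ``$0$ lies in the subdifferential of $V$ at $\mu^\star$'' to ``there exists a maximizer with $R_{01}^{\rm max}=R_{02}^{\rm max}$'' fully rigorous---i.e.\ controlling the attainable values of $R_{01}^{\rm max}-R_{02}^{\rm max}$ over the maximizer set at $\mu^\star$---is the delicate point, and it is precisely here that I would follow the machinery of~\cite{Liang07}.
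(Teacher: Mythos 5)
Your proposal takes exactly the route the paper intends: the paper gives no proof of Lemma~\ref{lem:maxmin} at all, deferring to the ``same steps of~\cite{Liang07}'', and your rewriting of the objective via $\min\{a,b\}=\min_{\mu\in[0,1]}\{\mu a+(1-\mu)b\}$ followed by the sandwich chain $g(\bm{p})\le F(\bm{p},\mu)\le F(\bm{p}^{(k)},\mu)=g(\bm{p}^{(k)})$ is precisely that machinery. Your verification of the three branches of (\ref{eq:caseslemma}) is complete and rigorous, and you are right that this direction needs no concavity; in this respect your write-up is strictly more explicit than the paper's.

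The one genuine gap is the exhaustiveness step, and it is slightly deeper than your own caveat suggests. Danskin's theorem gives $\partial V(\mu)=\mathrm{conv}\bigl\{w_0[R_{01}^{\rm max}(\bm{p})-R_{02}^{\rm max}(\bm{p})]:\bm{p}\in\arg\max_{\bm{p}'}F(\bm{p}',\mu)\bigr\}$, so $0\in\partial V(\mu^\star)$ only says that zero is a \emph{convex combination} of the rate gaps over the maximizer set at $\mu^\star$; it does not by itself produce a single maximizer with $R_{01}^{\rm max}=R_{02}^{\rm max}$. Because $F(\cdot,\mu)$ is not concave in $\bm{p}$ (the common-rate terms have indefinite Hessians, as you observe), one cannot convexify two maximizers with gaps of opposite sign into one with zero gap, and in principle a duality gap $\max_{\bm{p}}\min_\mu F<\min_\mu\max_{\bm{p}}F$ could leave all three conditions of (\ref{eq:caseslemma}) unsatisfied. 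The way this is actually closed in the \cite{Liang07}-style argument is not through the abstract value function but through the explicit per-subchannel solutions: the closed-form maximizers of (P3) in Theorem~\ref{theo:powerallocation} depend continuously and monotonically on $\mu$ (the paper invokes exactly this monotonicity to justify the dichotomic search), so the gap $R_{01}^{\rm max}(\bm{p}^{(3)}(\mu))-R_{02}^{\rm max}(\bm{p}^{(3)}(\mu))$ is a continuous monotone function of $\mu$ that changes sign between the two endpoint regimes, and an intermediate-value argument applies to it directly. Since the paper itself silently delegates this point to the citation, your proof is no less complete than the paper's, but the step you flag as delicate is indeed where the only real work lies, and the value-function/subdifferential route you sketch would need to be replaced by (or supplemented with) this continuity-of-the-explicit-solution argument to be airtight.
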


We now focus on the solution of problems (P1)-(P3). Before introducing the result, we define the following terms, with $\ell =1,\ldots,L$ and $i = 1,2$. Let $ \bar{\imath}  = 2$ if $i=1$ and $\bar{\imath} = 1$ if $i =2$, let $ \mu_i = \mu $ if $i =1$ and $\mu_i = 1- \mu$ if $i =2$. Denote by $\delta_{i,\ell} =1/\alpha_{\bar{\imath},\ell}^+{-}1/\alpha_{i,\ell}^-$, and let $\lambda \geq 0 $ be a real valued parameter. Then, denote
\begin{subequations}
\begin{equation}
\beta_{i,\ell} =\frac{1}{2} \left[\delta_{i,\ell} \left(\delta_{i,\ell}+\frac{2 w_i}{\lambda \ln 2}\right)\right]^{1/2} -\frac{1}{2}\left(\frac{1}{\alpha_{\bar{\imath},\ell}^+}+\frac{1}{\alpha_{i,\ell}^-}\right)  \\ \label{eq:squareroot}
\end{equation}
\begin{equation}
\gamma_{i,\ell}= \frac{w_0}{2 \lambda \ln 2} - \frac{1}{\alpha_{i,\ell}^-} \,, \quad 
\zeta_{i,\ell} = \frac{w_i}{w_0} \delta_{i,\ell} - \frac{1}{\alpha_{\bar{\imath},\ell}^+} \\ \label{eq:intersectioncaseone}
\end{equation}
\begin{align}\label{eq:rootcasethree}
\nu_{i,\ell} & =  \frac{1}{2} \left[  \left( \frac{1}{\alpha_{\bar{\imath},\ell}^{-}} -\frac{1} {\alpha_{i,\ell}^{-}} -\frac{w_0}{2\lambda \ln 2} \right)^2 \right. \notag  \\
            & \left. \qquad \qquad + \frac{2 w_0 \mu_i}{ \lambda \ln 2} ( \frac{1}{\alpha_{\bar{\imath},\ell}^{-}} - \frac{1} {\alpha_{i,\ell}^{-} } )  \right]^{1/2} \notag \\
             & \quad-  \frac{1}{2} \left(\frac{1}{\alpha_{\bar{\imath},\ell}^{-}}+\frac{1}{\alpha_{i,\ell}^{-}}-\frac{w_0}{2 \lambda \ln 2} \right) \notag \\
\end{align}
\begin{align*}
\Delta_{i,\ell} = \left[ \left(\frac{w_i}{w_0}\right)^2 + \frac{w_i}{w_0} \left( \frac{2 \cdot (\frac{2}{\alpha_{\bar{\imath},\ell}^{-}}-\frac{1}{\alpha_{i,\ell}^{-}}- \frac{1} {\alpha_{\bar{\imath},\ell}^{+}})}{\frac{1}{\alpha_{\bar{\imath},\ell}^{+}}-\frac{1}{\alpha_{i,\ell}^{-}} } \right) + 1 \right] \\
\cdot \left[ \frac{1}{\alpha_{\bar{\imath},\ell}^{+}} - \frac{1}{\alpha_{i,\ell}^{-}} \right]^2
\end{align*}
\begin{equation}\label{eq:intersectionnew}
\theta_{i,\ell} = \frac{ \frac{w_i}{w_0} \bigl( \frac{1} {\alpha_{\bar{\imath},\ell}^{+}}- \frac{1} {\alpha_{i,\ell}^{-}} \bigr)- \bigl( \frac{1} {\alpha_{\bar{\imath},\ell}^{+}} + \frac{1} {\alpha_{i,\ell}^{-}} \bigr) + \sqrt{\Delta_{i,\ell}} } {2}  \\
\end{equation}
\begin{align*}\label{eq:deltapthree}
\Lambda_{i,\ell}  =  \bigl(\delta_{i,\ell}\bigr)^2 \left( \frac{w_i}{w_0} \right)^2 + 2 \frac{w_i}{w_0} \left[ \delta_{i,\ell} \bigl( \frac{2-\mu_i}{\alpha_{\bar{\imath},\ell}^{-}} - \frac{\mu_{\bar{\imath}}}{\alpha_{i,\ell}^{-}} - \frac{1}{\alpha_{\bar{\imath},\ell}^{+}} \bigr) \right] \\
                    \:  +  \bigl(\delta_{i,\ell}\bigr)^2 + \mu_i \bigl( \frac{1}{\alpha_{\bar{\imath},\ell}^{-}} - \frac{1}{\alpha_{i,\ell}^{-}} \bigr) \left[ \mu_i \bigl( \frac{1}{\alpha_{\bar{\imath},\ell}^{-}} -  \frac{1}{\alpha_{i,\ell}^{-}} \bigr) - 2 \bigl (  \frac{1}{\alpha_{\bar{\imath},\ell}^{+}} -  \frac{1}{\alpha_{\bar{\imath},\ell}^{-}} \bigr ) \right ] \\
\end{align*}
\begin{equation}
\xi_{i,\ell} = \frac{ \frac{w_i}{w_0} \delta_{i,\ell} - \bigl( \frac{1}{\alpha_{\bar{\imath},\ell}^{+}}+ \frac{1}{\alpha_{i,\ell}^{-}} \bigr) - \mu_i \bigl( \frac{1}{\alpha_{\bar{\imath},\ell}^{-}} - \frac{1}{\alpha_{i,\ell}^{-}} \bigr) + \sqrt{\Lambda_{i,\ell}} }{2}\ . \\ \label{eq:intersectioncasethree}
\end{equation}
\end{subequations}

The main result for the solution of the optimization problem (\ref{eq:optimizationproblem}) is provided by the following theorem.
\begin{theorem}\label{theo:powerallocation}
The solutions of problems (P1)-(P3) are:

\textbf{(P1)} For $\ell \in S_1$, if $\frac{w_1}{w_0} > \frac{\alpha_{1,\ell}^-}{\alpha_{1,\ell}^{-}-\alpha_{2,\ell}^+}$, then
\begin{subequations}\label{eq:p1}
\begin{equation}
p_{0,\ell}^{(1)}=\left[\gamma_{1,\ell}-\zeta_{1,\ell} \right]^+\,, \;   p_{1,\ell}^{(1)}= \left[  \min \left\lbrace \beta_{1,\ell};\zeta_{1,\ell} \right\rbrace \right]^+ . 
\label{p11} 
\end{equation}
Otherwise, if $\frac{w_1}{w_0} \leq \frac{\alpha_{1,\ell}^-}{\alpha_{1,\ell}^{-}-\alpha_{2,\ell}^+}$, then
\begin{equation}
p_{0,\ell}^{(1)} = \left[\gamma_{1,\ell} \right]^+ \,, \quad p_{1,\ell}^{(1)}=0 \:. 
\label{p12}
\end{equation}
For $\ell \in S_2$, if $\frac{w_2}{w_0} > \frac{\alpha_{1,\ell}^-}{\alpha_{2,\ell}^{-}- \alpha_{1,\ell}^{+} }$, then
\begin{equation}
p_{0,\ell}^{(1)}=\left[\gamma_{1,\ell}- \theta_{2,\ell} \right]^+\,,  \;  p_{2,\ell}^{(1)}= \left[  \min \left\lbrace \beta_{2,\ell}; \theta_{2,\ell}  \right\rbrace \right]^+ . 
\label{p13}
\end{equation}
Otherwise, if $\frac{w_2}{w_0} \leq \frac{\alpha_{1,\ell}^-}{\alpha_{2,\ell}^{-}- \alpha_{1,\ell}^{+} }$, then
\begin{equation}\label{eq:powerallocationcaseone}
p_{0,\ell}^{(1)} = \left[\gamma_{1,\ell} \right]^+\,,   \quad p_{2,\ell}^{(1)}=0  \, .
\end{equation}
For $\ell \in S_3$,
\begin{equation}
p_{0,\ell}^{(1)} = \left[ \gamma_{1,\ell} \right]^+ 
\label{p18}
\end{equation}
\end{subequations}
where $\lambda$ is chosen to satisfy the total power constraint (\ref{eq:powerconstraintversiontwo}).

\textbf{(P2)} Due to the symmetry (with respect to the user index) of problems (P1) and (P2), solution (P2) is the same as that of (P1) where user indices 1 and 2 are swapped.
%

\textbf{(P3)} For $\ell \in S_i$, if $\Lambda_{i,\ell} > 0$, then 

if $\frac{w_i}{w_0} > \frac{\mu_i \alpha_{i,\ell}^{-} + \mu_{\bar{\imath}} \alpha_{\bar{\imath},\ell}^{-}}{\alpha_{i,\ell}^{-} -\alpha_{\bar{\imath},\ell}^{+}}$, then
\begin{subequations}\label{eq:p3}
\begin{equation}
p_{0,\ell}^{(3)}= \left[ \nu_{i,\ell} - \xi_{i,\ell} \right]^+\,, \quad p_{i,\ell}^{(3)}= \left[  \min \left\lbrace \beta_{i,\ell};\xi_{i,\ell} \right\rbrace  \right]^+ .
\end{equation}
Otherwise, if $\frac{w_i}{w_0} \leq \frac{\mu_i \alpha_{i,\ell}^{-} + \mu_{\bar{\imath}} \alpha_{\bar{\imath},\ell}^{-}}{\alpha_{i,\ell}^{-} -\alpha_{\bar{\imath},\ell}^{+}}$, then
\begin{equation}
p_{0,\ell}^{(3)}= \left[ \nu_{i,\ell} \right]^+\,, \quad p_{i,\ell}^{(3)}=0 \, .
\end{equation}
If $\Lambda_{i,\ell} = 0 $, then 

if $\frac{w_i}{w_0} > \frac{ \alpha_{i,\ell}^{-} + \mu_{\bar{\imath}} \alpha_{\bar{\imath},\ell}^{+} + \mu_i \frac{\alpha_{i,\ell}^{-} \alpha_{\bar{\imath},\ell}^{+}}{\alpha_{\bar{\imath},\ell}^{-}}}{\alpha_{i,\ell}^{-} - \alpha_{\bar{\imath},\ell}^{+} }$, then
\begin{equation}
p_{0,\ell}^{(3)}=\left[\nu_{i,\ell} - \xi_{i,\ell} \right]^+\,, \quad p_{i,\ell}^{(3)}= \left[  \min \left\lbrace \beta_{i,\ell};\xi_{i,\ell} \right\rbrace \right] ^+ . 
\end{equation}
Otherwise, if $\frac{w_i}{w_0} \leq \frac{ \alpha_{i,\ell}^{-} + \mu_{\bar{\imath}} \alpha_{\bar{\imath},\ell}^{+} + \mu_i \frac{\alpha_{i,\ell}^{-} \alpha_{\bar{\imath},\ell}^{+}}{\alpha_{\bar{\imath},\ell}^{-}}}{\alpha_{i,\ell}^{-} - \alpha_{\bar{\imath},\ell}^{+} } $, then
\begin{equation}
p_{0,\ell}^{(3)} = \left[ \nu_{i,\ell} \right]^+ \, , \quad p_{i,\ell}^{(3)}=0  \, .
\end{equation}
If $ \Lambda_{i,\ell} < 0$, then 

if $ \frac{w_i}{w_0} > \frac{\mu_i \alpha_{i,\ell}^{-} + \mu_{\bar{\imath}} \alpha_{\bar{\imath},\ell}^{-}}{\alpha_{i,\ell}^{-} -\alpha_{\bar{\imath},\ell}^{+}} $, then
\begin{equation}
p_{0,\ell}^{(3)} = 0 \,, \quad p_{i,\ell}^{(3)}= \left[ \beta_{i,\ell} \right]^+  \, .
\end{equation}
Otherwise, if $ \frac{w_i}{w_0} \leq \frac{\mu_i \alpha_{i,\ell}^{-} + \mu_{\bar{\imath}} \alpha_{\bar{\imath},\ell}^{-}}{\alpha_{i,\ell}^{-} -\alpha_{\bar{\imath},\ell}^{+}}$, then 
\begin{equation}
p_{0,\ell}^{(3)} = \left[\nu_{i,\ell} \right]^+ \,, \quad p_{i,\ell}^{(3)}=0 \, .
\end{equation}
For $\ell \in S_3$, 
\begin{equation}
p_{0,\ell}^{(3)} = \left[\nu_{1,\ell} \right]^+
\end{equation}
\end{subequations}
where $\lambda$ is chosen to satisfy the total power constraint (\ref{eq:powerconstraintversiontwo}), and $\mu$ is chosen to satisfy $R_{01}^{\rm max}\bigl(\bm{p}^{(3)}\bigr)  = R_{02}^{\rm max}\bigl(\bm{p}^{(3)}\bigr)$.

\end{theorem}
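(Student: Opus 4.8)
The plan is to solve each of (P1)--(P3) by writing the Karush--Kuhn--Tucker (KKT) conditions, exploiting that the objective is concave on the feasible set (the region is convex by the remark preceding \eqref{eq:optimizationproblem}) so that KKT stationarity is both necessary and sufficient, and that the only coupling constraint is the scalar power budget. I would attach a single multiplier $\lambda \ge 0$ to the constraint $\sum_\ell (p_{0,\ell}+p_{1,\ell}+p_{2,\ell}) \le P$ and, because the objective is a sum of per-sub-channel terms, observe that for fixed $\lambda$ the Lagrangian separates into $L$ independent scalar problems; $\lambda$ is then just the waterfilling level tuned to meet the budget with equality, exactly as the closing sentence asserts. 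Before differentiating I would prune variables using the partition \eqref{eq:setsdefinition}: on $\ell \in S_1$ the rate $R_2^{\rm max}$ does not sum over $\ell$, so $p_{2,\ell}$ appears only in the leakage term of the common rate, where it is strictly harmful; hence $p_{2,\ell}=0$ is optimal, and symmetrically $p_{1,\ell}=0$ on $S_2$ while both vanish on $S_3$. This leaves, on each $S_1$ sub-channel, the pair $(p_{0,\ell},p_{1,\ell})$, and I would change variables to the total power $q_\ell = p_{0,\ell}+p_{1,\ell}$ and the private power $p_{1,\ell}$, subject to $0 \le p_{1,\ell} \le q_\ell$.

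For (P1) the stationarity condition in $q_\ell$ immediately yields the level $q_\ell=\gamma_{1,\ell}$ of \eqref{eq:intersectioncaseone}. The interior stationarity in $p_{1,\ell}$ (valid when $p_{0,\ell}>0$) is, on $S_1$, a \emph{linear} equation, because the common-rate noise term and the private-rate signal term share the gain $\alpha_{1,\ell}^-$ and its reciprocal cancels; its root is exactly $\zeta_{1,\ell}$. When instead $p_{0,\ell}=0$ the common-rate contribution collapses and the single-variable problem becomes a \emph{quadratic} whose positive root is $\beta_{1,\ell}$ of \eqref{eq:squareroot}. A short continuity check shows the interior and boundary roots coincide precisely when $\gamma_{1,\ell}=\zeta_{1,\ell}$, i.e.\ when $p_{0,\ell}$ touches zero; this justifies writing the private power as $[\min\{\beta_{1,\ell},\zeta_{1,\ell}\}]^+$ and the common power as $[\gamma_{1,\ell}-\zeta_{1,\ell}]^+$, and the threshold $w_1/w_0 > \alpha_{1,\ell}^-/(\alpha_{1,\ell}^- - \alpha_{2,\ell}^+)$ is simply the condition $\zeta_{1,\ell}>0$. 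On $S_2$ the common and private messages are decoded by \emph{different} users, the cancellation no longer occurs, and the interior $p_{2,\ell}$-equation has three distinct poles; clearing denominators gives a quadratic with discriminant $\Delta_{2,\ell}$ and relevant root $\theta_{2,\ell}$ of \eqref{eq:intersectionnew}, explaining the different form of \eqref{p13}. Problem (P2) follows verbatim after swapping the user indices.

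Problem (P3) repeats the same programme but with the common-rate objective replaced by $\mu R_{01}^{\rm max} + (1-\mu) R_{02}^{\rm max}$, so the $w_0$ contribution now involves \emph{both} gains $\alpha_{i,\ell}^-$ and $\alpha_{\bar{\imath},\ell}^-$. The stationarity in $q_\ell$ becomes quadratic and returns $\nu_{i,\ell}$ of \eqref{eq:rootcasethree}, while the interior stationarity in $p_{i,\ell}$ is likewise quadratic, with discriminant $\Lambda_{i,\ell}$ and root $\xi_{i,\ell}$ of \eqref{eq:intersectioncasethree}. The three branches $\Lambda_{i,\ell}\gtrless 0$ correspond to the quadratic having two, one, or no admissible interior roots: when it degenerates ($\Lambda_{i,\ell}=0$) the threshold acquires the extra $\alpha_{\bar{\imath},\ell}^+/\alpha_{\bar{\imath},\ell}^-$ term, and when $\Lambda_{i,\ell}<0$ no interior private optimum exists, so the solution sits on the boundary $p_{0,\ell}=0$ with $p_{i,\ell}=[\beta_{i,\ell}]^+$. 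The outer parameter $\mu$ is finally fixed by the equalization $R_{01}^{\rm max}(\bm{p}^{(3)})=R_{02}^{\rm max}(\bm{p}^{(3)})$ demanded by Lemma~\ref{lem:maxmin}.

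I expect the main obstacle to be bookkeeping rather than conceptual: verifying, in the (P3) case split, that the $\min\{\cdot\}$ and $[\cdot]^+$ operators select exactly the branch dictated by complementary slackness, and that the interior and boundary roots join continuously across every threshold, so that the piecewise closed form is in fact the unique KKT point. The two supporting facts I would pin down first are the concavity of the objective (hence sufficiency of KKT) and the optimality of the variable pruning on each $S_1,S_2,S_3$.
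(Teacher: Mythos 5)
Your algebraic roadmap is essentially right --- the variable pruning on $S_1,S_2,S_3$, the change of variables to $q_\ell=p_{0,\ell}+p_{1,\ell}$, the identification of $\gamma_{1,\ell}$ as the root of the common-message marginal utility, of $\zeta_{1,\ell}$ (linear, because the gain $\alpha_{1,\ell}^-$ is shared) and $\theta_{2,\ell}$ (quadratic, three distinct poles) as the relevant crossing points, and of $\beta_{i,\ell}$ as the boundary root all match the quantities the paper manipulates. But the logical foundation you propose does not hold: the per-sub-channel objective is \emph{not} concave in the power variables, so KKT stationarity is not sufficient and your ``unique KKT point'' conclusion fails. Concretely, on $\ell\in S_1$ the objective in the coordinates $(q_\ell,p_{1,\ell})$ separates into a concave part in $q_\ell$ plus the term $\tfrac{w_1-w_0}{2}\log(1+\alpha_{1,\ell}^- p_{1,\ell})-\tfrac{w_1}{2}\log(1+\alpha_{2,\ell}^+ p_{1,\ell})$, which is a difference of concave functions; when $w_1<w_0$ both summands are convex and the inner problem is strictly convex in $p_{1,\ell}$, so its interior stationary point is a \emph{minimizer} and the optimum sits at an endpoint. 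You reach the correct endpoint answer ($p_{1,\ell}=0$) in that branch, but not for the reason you give, and the convexity of the capacity region as a set of rate triples (which you invoke) says nothing about concavity of $R^{\rm max}(\bm{p})$ in $\bm{p}$. Since you flagged concavity as the first fact to ``pin down,'' this is the step that would collapse.

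The paper avoids this entirely with a different device: it writes the per-sub-channel Lagrangian as $\mathcal{L}_1=\int_{p_{1,\ell}}^{p_{1,\ell}+p_{0,\ell}}u_{0,\ell}(x)\,dx+\int_0^{p_{1,\ell}}u_{1,\ell}(x)\,dx$, where $u_{0,\ell},u_{1,\ell}$ are the marginal utilities in (\ref{eq:fractioncommoncaseone})--(\ref{eq:fractionconfidentialone}), and bounds it globally by $\int_0^{+\infty}[\max\{u_{0,\ell}(x),u_{1,\ell}(x)\}]^+dx$; exhibiting an allocation that attains this bound certifies global optimality with no convexity hypothesis. Your stationarity equations are exactly the root and intersection conditions for these $u$-curves, so the fix is to replace the KKT/concavity argument by this integral upper bound (or, equivalently, to argue case by case that the candidate endpoint or crossing point dominates every feasible allocation). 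Two further bookkeeping items you would still need, and which the paper handles explicitly: on $S_2$ the quadratic intersection of $u_{0,\ell}$ and $u_{2,\ell}$ has two roots and one must rule out the configuration where both are positive; and in (P3) the sign of $\Lambda_{i,\ell}$ governs whether the two marginal-utility curves cross at all, which is what produces the three-way branch and the $p_{0,\ell}^{(3)}=0$ case, rather than a statement about ``admissible interior roots'' of a single stationarity equation.
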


\begin{proof}\label{pro:demonstrationalgorithm}
See the Appendix.
\end{proof}

Table~\ref{tab:powerallocation} summarizes the power allocation algorithm solving (\ref{eq:optimizationproblem}). 
The algorithm includes three steps consisting of simple closed-form solutions of problems (P1)-(P3). Steps 1 and 2 require the optimization of $\lambda$, while Step 3 requires the optimization of both $\lambda$ and $\mu \in (0,1)$.  As  $R_{01}^{\rm max}(\bm{p}^{(3)})$ and $R_{02}^{\rm max}(\bm{p}^{(3)})$ are monotonous functions versus both $\lambda$ and $\mu$, these optimizations can be performed efficiently for instance by a dichotomic search. Moreover, the two searches can be performed in cascade. 

\bigskip
\begin{table}
\centering
\caption{Power allocation algorithm.}\label{tab:powerallocation}
\scalebox{1.1}{
\begin{tabular}[ht]{rl}
\hline
\textit{Step 1} & Compute $\bm{p}^{(1)}$ by (\ref{eq:p1}). \\
       & \: If $R_{01}^{\rm max}(\bm{p}^{(1)}) {<} R_{02}^{\rm max}(\bm{p}^{(1)})$, then $\bm{p}^* = \bm{p}^{(1)}$ . \\
       & \: Otherwise, go to \textit{Step 2}. \\
       & \\
\textit{Step 2} & Compute $\bm{p}^{(2)}$ by (\ref{eq:p1}) with user indices exchanged. \\
                & \: If $R_{01}^{\rm max}(\bm{p}^{(2)}) {>} R_{02}^{\rm max}(\bm{p}^{(2)})$ , then $\bm{p}^* = \bm{p}^{(2)}$ . \\
                & \: Otherwise, go to \textit{Step 3}. \\
                &                                      \\
\textit{Step 3} & Compute $\bm{p}^{(3)}$ by (\ref{eq:p3}). \\
                & \: Then $\bm{p}^* =\bm{p}^{(3)}$ . \\
\hline
\end{tabular}} 
\end{table}
                 
Note that the power allocation algorithm can be used also for perfect CSIT by letting $\alpha_{i,\ell}^- = \alpha_{i,\ell}^+$.

\section{Numerical Results}\label{sec:simulations}

In this section, we validate the analytical results by considering a system where the number of sub-channels $L$ and the total power $P$ are both fixed to $64$. Each sub-channel is Rayleigh fading, thus, the powers of the sub-channel gains $h_{1,\ell}^2$ and $h_{2,\ell}^2$ are exponentially distributed with means $\rm {SNR}_1 = \mathbb{E}\{h_{1,\ell}^2\}$ and $\rm{SNR}_2 = \mathbb{E}\{h_{2,\ell}^2\}$.

Fig.~\ref{fig:secrecy_capacity} shows a contour plot of the boundary surface for the three dimensional secrecy capacity averaged over the channel realizations with $\rm{SNR}_1 = \rm{SNR}_2 = 10$ dB and perfect CSIT. We  remark that the surface of the secrecy capacity  gets smaller as $\mathbb{E}[R_0]$ increases. Moreover, the secrecy capacity region is symmetric for the same average $\rm{SNR}$ values of both users.

\begin{figure}
\centering
\psfrag{R1 [bits/s/Hz]}[t][]{\scriptsize $\mathbb{E} [R_1]$ [bits/s/Hz]}
\psfrag{R2 [bits/s/Hz]}[b][]{\scriptsize $\mathbb{E} [R_2]$ [bits/s/Hz]}
\psfrag{R0=1}[r][]{\scriptsize $\mathbb{E}[R_0] = 1$}
\psfrag{R0=0.8}[r][]{\scriptsize $\mathbb{E}[R_0] = 0.8$}
\psfrag{R0=0.6}[r][]{\scriptsize $ \mathbb{E} [R_0] = 0.6$}
\psfrag{R0=0.4}[][]{\scriptsize $\mathbb{E} [R_0] = 0.4$}
\includegraphics[height=0.9\hsize]{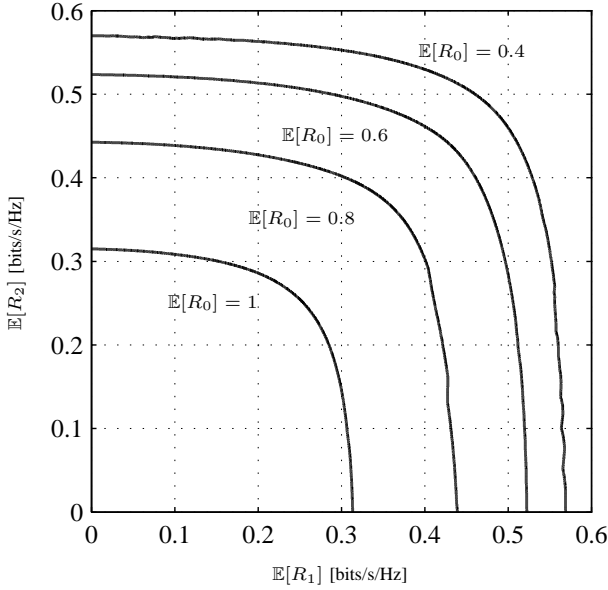}
\caption{Contour plot of the boundary surface of the secrecy capacity region.}
\label{fig:secrecy_capacity}
\end{figure}  
%
%

We then compare the algorithm with two other schemes. The first one is uniform power allocation over the sub-channels and over the three messages. The second scheme is the power allocation algorithm proposed in~\cite{Jorswieck} which maximizes the sum secrecy-rate in the presence of two confidential messages but without a common message. In order to compare~\cite{Jorswieck} with our approach, we first assign power $P/3$ to transmit the common message and then we split the remaining power $2\, P/3$ between the two confidential messages according to the algorithm of~\cite{Jorswieck}. Fig.~\ref{fig:comparisonalgorithms}.a compares the average weighted sum-rate (with $w_0=w_1=w_2$) of our algorithm with the average sum-rate of the two schemes versus $\rm{SNR} = \rm{SNR}_1 = \rm{SNR}_2$. The optimal algorithm provides a significant advantage mainly at high $\rm{SNR}$ range.

\begin{figure}
\centering
\psfrag{SNR [dB]}[t][t]{\scriptsize $\rm {SNR}$ [dB]}
\psfrag{Sum rate [bits/s/Hz]}[][]{\scriptsize Average weighted sum-rate [bits/s/Hz]}
\psfrag{Optimal}[][]{\scriptsize Optimal}
\psfrag{Uniform}[][]{\scriptsize Uniform}
\psfrag{Jors}[l][]{\scriptsize ~\cite{Jorswieck}}
\includegraphics[height=6.2cm,width=0.49\hsize]{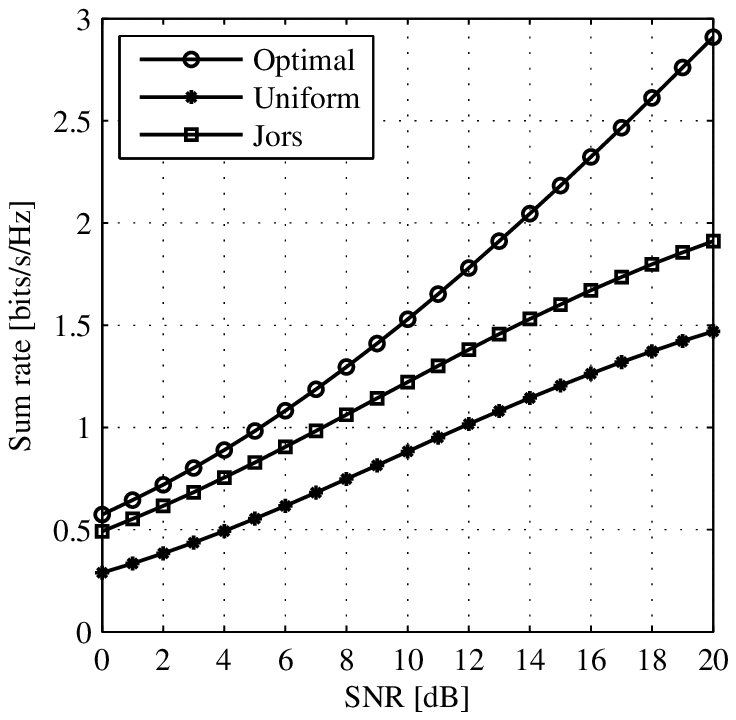}
\includegraphics[height=6.3cm,width=0.49\hsize]{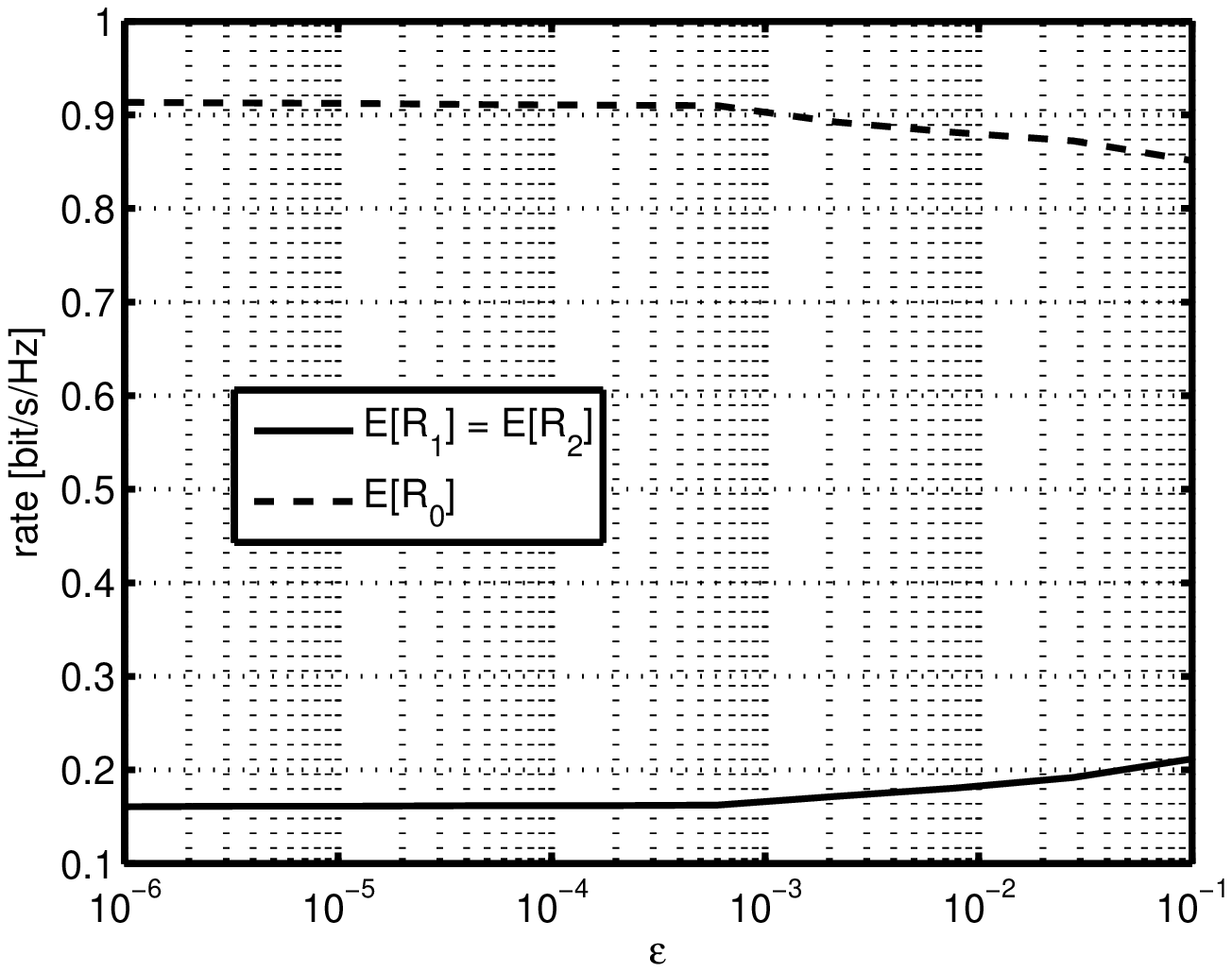}

a) \hspace{4cm} b)
\caption{a) Average weighted sum-rate ($w_0=w_1=w_2=1$) versus $\rm{SNR}=\rm{SNR}_1=\rm{SNR}_2$ of some power allocation algorithms.}\label{fig:comparisonalgorithms}
\end{figure}

Lastly, in Fig.~\ref{fig:comparisonalgorithms}.b  we show the rates when imperfect CSIT is available with $\sigma = 0.01$, as function of $\epsilon$. We note that as $\epsilon$ increases the secret rates increases, as we are less restrictive on the illegitimate channel, while on the other hand the rate of the common message decreases.

\section{Conclusions}\label{section:conclusion}

We have provided a characterization of the boundary for the secrecy capacity region of a parallel BCC with two confidential and one common messages by converting it into a power allocation problem. An almost closed-form solution to the problem has been provided, which can also be exploited in practical scenarios for  resource allocation in OFDM systems with secrecy and fairness constraints. Comparison with existing power allocation schemes highlights the significant advantage of the optimal algorithm.

\appendix
 We solve problems (P1)-(P3) by a technique based on deriving an upper bound on the Lagrangian operator and establishing power allocations that achieve the upper bound.

\textbf{(P1)}
The Lagrangian $\mathcal{L}$ of (P1) is given by
\begin{align}\label{eq:lagrangiancaseone}
\mathcal{L} & = \sum_{\ell=1}^L \frac{w_0}{2} \log \bigl(1+ \frac{ \alpha_{1,\ell}^{-} p_{0,\ell}}{1+ \alpha_{1,\ell}^{-} [p_{1,\ell}+p_{2,\ell}]} \bigr)  \nonumber \\
             & + \sum_{\ell \in S_1} \frac{ w_1}{2} \log \bigl( 1+ \alpha_{1,\ell}^{-} p_{1,\ell} \bigr) - \frac{w_1}{2} \log \bigl(1+ \alpha_{2,\ell}^{+} p_{1,\ell} \bigr) \nonumber \\
             & + \sum_{\ell \in S_2} \frac{w_2}{2} \log \bigl(1+ \alpha_{2,\ell}^- p_{2,\ell} \bigr) - \frac{w_2}{2} \log \bigl(1+ \alpha_{1,\ell}^{+} p_{2,\ell} \bigr) \nonumber \\
             & - \lambda \sum_{\ell=1}^L \bigl[p_{0,\ell} + p_{1,\ell} + p_{2,\ell}\bigr]
\end{align}
where $\lambda \geq 0$ is the Lagrange multiplier. For $\ell \in S_1$, the transmitter merely sends the common and the confidential messages $M_1$ (i.e., $p_{2,\ell}{=}0$). For $\ell \in S_2$, the transmitter sends the common and the confidential message $M_2$, (i.e. $p_{1,\ell}{=}0$). While, for $\ell \in S_3$, the transmitter sends only the common message. For $\ell \in S_1$, $p_{0,\ell}^{(1)}$ and $p_{1,\ell}^{(1)}$ need to maximize :
\begin{align}\label{eq:lagrangianonesone}
\mathcal{L}_1 & = \frac{w_0}{2} \log \bigl(1+\frac{ \alpha_{1,\ell}^- p_{0,\ell}} {1+ \alpha_{1,\ell}^- p_{1,\ell}} \bigr) + \frac{w_1}{2}  \log \bigl( 1+ \alpha_{1,\ell}^- p_{1,\ell} \bigr)  \nonumber \\
               & - \frac{w_1}{2}  \log \bigl(1+ \alpha_{2,\ell}^+ p_{1,\ell} \bigr)  - \lambda (p_{0,\ell} + p_{1,\ell}) .
\end{align}
We denote by $u_{0,\ell}(\cdot)$ and $u_{1,\ell}(\cdot)$ the partial derivative of $\mathcal{L}_1$ with respect to $p_{0,\ell}$ and $p_{1,\ell}$, respectively:
\begin{equation}\label{eq:fractioncommoncaseone}
u_{0,\ell}(x)= \frac{w_0}{2 \ln 2} \frac{\alpha_{1,\ell}^-}{1+\alpha_{1,\ell}^- x} - \lambda 
\end{equation}
\begin{equation}\label{eq:fractionconfidentialone}
u_{i,\ell}(x)= \frac{w_i}{2 \ln 2} \left( \frac{\alpha_{i,\ell}^-}{1+\alpha_{i,\ell}^- x}-\frac{\alpha_{\bar{\imath},\ell}^+}{1+\alpha_{\bar{\imath},\ell}^+ x} \right) - \lambda .
\end{equation}
Then, (\ref{eq:lagrangianonesone}) can be rewritten as
\begin{equation}\label{eq:rewritelagrangianonecaseone}
\mathcal{L}_1 = \int_{p_{1,\ell}}^{p_{1,\ell}+p_{0,\ell}} u_{0,\ell}(x) \, dx + \int_0^{p_{1,\ell}} u_{1,\ell}(x) \,dx 
\end{equation}
and upper bounded as
\begin{equation}\label{eq:upperboundlagrangianonecaseone}
\mathcal{L}_1  \leq \int_0 ^{+\infty} \left[ \max \lbrace u_{0,\ell}(x),u_{1,\ell}(x)\rbrace \right]^+  \,dx .
\end{equation}
The root of $u_{0,\ell}(x)$ is $\gamma_{1,\ell}$ defined in (\ref{eq:intersectioncaseone}) while the largest root of $u_{1,\ell}(x)$ is $\beta_{1,\ell}$ defined in (\ref{eq:squareroot}). $u_{0,\ell}(x)$ and $u_{1,\ell}(x)$ intersect at the point $\zeta_{1,\ell}$ given by (\ref{eq:intersectioncaseone}). In the following, we consider two cases. 

\begin{enumerate}
\item $\frac{w_1}{w_0} > \frac{\alpha_{1,\ell}^-}{\alpha_{1,\ell}^{-} -\alpha_{2,\ell}^+}$, i.e., $\zeta_{1,\ell}$ is positive. 

In this case, $u_{1,\ell}(0)>u_{0,\ell}(0)$. There are three possibilities to consider depending on the value of $\lambda$.
\begin{enumerate}
\item If $u_{1,\ell}(0)<0$, then both $u_{0,\ell}(x)$ and $u_{1,\ell}(x)$ are negative for $x>0$, and (\ref{eq:upperboundlagrangianonecaseone}) is achieved by $p_{0,\ell}^{(1)}=0$ and $p_{1,\ell}^{(1)}=0$.
\item If $u_{1,\ell}(0) \geq 0$ and $\gamma_{1,\ell} < \zeta_{1,\ell}$, then  (\ref{eq:upperboundlagrangianonecaseone}) is achieved by $p_{0,\ell}^{(1)}=0$ and $p_{1,\ell}^{(1)}=\beta_{1,\ell}$.
\item If $\gamma_{1,\ell}  \geq \zeta_{1,\ell}$, then (\ref{eq:upperboundlagrangianonecaseone}) is achieved by $p_{0,\ell}^{(1)}=\gamma_{1,\ell} -\zeta_{1,\ell}$ and $p_{1,\ell}^{(1)}=\zeta_{1,\ell}$.
\end{enumerate} 
In summary, we obtain (\ref{p11}).

\item $\frac{w_1}{w_0} \leq \frac{\alpha_{1,\ell}^-}{\alpha_{1,\ell}^{-} - \alpha_{2,\ell}^+}$, i.e., $\zeta_{1,\ell}$ is negative.

In this case, $u_{0,\ell}(0) \geq u_{1,\ell}(0)$. 
\begin{enumerate}
\item If $u_{0,\ell}(0) \leq 0$, then (\ref{eq:upperboundlagrangianonecaseone}) is achieved by $p_{0,\ell}^{(1)}=0$ and $p_{1,\ell}^{(1)}=0$. 
\item If $u_{0,\ell}(0) > 0$, then (\ref{eq:upperboundlagrangianonecaseone}) is achieved by $p_{0,\ell}^{(1)}=\gamma_{1,\ell}$ and $p_{1,\ell}^{(1)}=0$.
\end{enumerate}
In summary, we obtain (\ref{p12}).
\end{enumerate}

For $\ell \in S_2$, $p_{0,\ell}^{(1)}$ and $p_{2,\ell}^{(1)}$ need to maximize :
\begin{align}\label{eq:lagrangiantwocaseone}
\mathcal{L}_2  = & \frac{w_0}{2} \log \bigl(1+\frac{\alpha_{1,\ell}^- p_{0,\ell}}{1+ \alpha_{1,\ell}^- p_{2,\ell}} \bigr) + \frac{w_2}{2}  \log \bigl( 1+ \alpha_{2,\ell}^- p_{2,\ell} \bigr) \nonumber \\
               & - \frac{w_2}{2}  \log \bigl(1+ \alpha_{1,\ell}^+ p_{2,\ell} \bigr)  - \lambda (p_{0,\ell} + p_{2,\ell}) .
\end{align} 
Then, we obtain, analogously to (\ref{eq:upperboundlagrangianonecaseone})
\begin{equation}\label{eq:upperboundlagrangiantwocaseone}
\mathcal{L}_2  \leq  \int_0 ^{+\infty} \left[ \max \bigl\{u_{0,\ell}(x),u_{2,\ell}(x)\bigr\} \right]^+  \,dx .
\end{equation}
The largest root of $u_{2,\ell}(x)$ is $\beta_{2,\ell}$ given by (\ref{eq:squareroot}). $u_{0,\ell}(x)$ and $u_{2,\ell}(x)$ intersect at two points. The largest point $\theta_{2,\ell}$ is given by (\ref{eq:intersectionnew}). We consider two cases depending on the sign of the two points.

\begin{enumerate}
\item $\frac{w_2}{w_0} > \frac{\alpha_{1,\ell}^-}{\alpha_{2,\ell}^{+}- \alpha_{1,\ell}^+}$, \textit{i.e.}, one point is negative and the other is positive.

In this case, $u_{2,\ell}(0)>u_{0,\ell}(0)$. There are three possibilities to consider.
\begin{enumerate}
\item If $u_{2,\ell}(0)<0$, then both $u_{0,\ell}(x)$ and $u_{2,\ell}(x)$ are negative for $x>0$, and (\ref{eq:upperboundlagrangiantwocaseone}) is achieved by $p_{0,\ell}^{(1)}=0$ and $p_{2,\ell}^{(1)}=0$.
\item If $u_{2,\ell}(0) \geq 0$ and $\gamma_{1,\ell} < \theta_{2,\ell}$, then (\ref{eq:upperboundlagrangiantwocaseone}) is achieved by $p_{0,\ell}^{(1)}=0$ and $p_{2,\ell}^{(1)}=\beta_{2,\ell}$. 
\item  If $\gamma_{1,\ell} \geq \theta_{2,\ell}$, then (\ref{eq:upperboundlagrangiantwocaseone}) is achieved by $p_{0,\ell}^{(1)}=\gamma_{1,\ell}-\theta_{2,\ell}$ and $p_{2,\ell}^{(1)}=\theta_{2,\ell}$.
\end{enumerate}
In summary, we obtain (\ref{p13}).

\item $\frac{w_2}{w_0} \leq \frac{\alpha_{1,\ell}^-}{\alpha_{2,\ell}^{+}- \alpha_{1,\ell}^+}$, \textit{i.e.}, the two intersection points are negative.

In this case, $u_{0,\ell}(0)\geq u_{2,\ell}(0)$. There are two possibilities to consider. 
\begin{enumerate}
\item If $u_{0,\ell}(0) \leq 0$, then (\ref{eq:upperboundlagrangiantwocaseone}) is achieved by $p_{0,\ell}^{(1)}=0$ and $p_{2,\ell}^{(1)}=0$.
\item If $u_{0,\ell}(0) > 0$, then (\ref{eq:upperboundlagrangiantwocaseone}) is achieved by $p_{0,\ell}^{(1)}=\gamma_{1,\ell}$ and $p_{2,\ell}^{(1)}=0$. 

In summary, we obtain (\ref{eq:powerallocationcaseone}).  
\end{enumerate}

The case that the two points are positive is not possible.
\end{enumerate}

For $\ell \in S_3$, $p_{0,\ell}^{(1)}$ need to maximize
\begin{equation}\label{eq:langrangianthree}
\mathcal{L}_3 = \frac{w_0}{2} \log \bigl (1 + \alpha_{1,\ell}^- p_{0,\ell} \bigr) - \lambda \, p_{0,\ell}.
\end{equation}
$\mathcal{L}_3$ can be upper bounded by
\begin{equation}\label{eq:upperboundlagrangianthree}
\mathcal{L}_3 = \int_0^{p_{0,\ell}} u_{0,\ell} (x) \,dx  \leq \int_0 ^{+ \infty} [ u_{0,\ell}(x) ]^+ \, dx .
\end{equation}
If $u_{0,\ell}(0) < 0$, then the upper bound on $\mathcal{L}_3$ is achieved by $p_{0,\ell}^{(1)}=0$. If $u_{0,\ell}(0) \geq 0$, the upper bound is achieved in this case by $p_{0,\ell}^{(1)} = \gamma_{1,\ell}$. In summary, we obtain (\ref{p18}).

The Lagrange parameter $\lambda$ is chosen to satisfy the power constraint with equality.

The solution of (P2) and (P3) is obtained in a similar fashion to that of (P1) and is not reported here for the sake of conciseness.

\bibliographystyle{IEEEtran}
\bibliography{biblio}

\end{document}